\definecolor{red}{RGB}{255,0,0}
\definecolor{blue}{RGB}{0,0,255}
\definecolor{green}{RGB}{0,255,0}
\newcommand {\abs}[1]  {\left\vert#1\right\vert}
\newcommand {\set}[1]  {\left\{#1\right\}}
\newcommand {\defined} {\stackrel{def} {=}}
\newcommand {\nph}     {\textsc{NP}\textrm{-hard}}
\newcommand {\runningtitle}[1] {\vspace{0.5ex}\noindent{\textbf{\boldmath #1:}}}
\newcommand {\commentfig}[1] {#1}
\newtheorem{obs}   {Observation}
\newcommand{\benpg}[1] {\textsc{B}_{{#1}}\textsc{-ENPG}}
\newcommand{\boneenpg} {\benpg{1}}
\newcommand{\maxcut} {\textsc{MaxCut}}
\DeclareMathOperator{\cs}{cs}
\DeclareMathOperator{\swp}{swp}
\DeclareMathOperator{\rot}{rot}
\newcommand{\cutsize}[1] {\cs(#1)}
\newcommand{\swap}[2] {\swp(#1,#2)}
\newcommand{\rotate}[3] {\rot(#1,#2,#3)}
\newcommand{\revrotate}[3] {\rot^{-1}(#1,#2,#3)}
\newcommand{\vect}[1]{\mathbf{#1}}
\newcommand{\cc}{{\cal C}}
\newcommand{\ff}{{\cal F}}
\newcommand{\ct}{{\cal T}}
\newcommand{\tw}[1]{\ct_{#1}}
\begin{document}
\sloppy
\title{The Maximum Cut Problem in Co-bipartite Chain Graphs\thanks{This work is supported in part by TUBITAK Career Project Grant no: 111M482, and TUBITAK 2221 Program.}
}


\author{Arman Boyac{\i} \and
        T{\i}naz Ekim \and \newline
        Mordechai Shalom
}


\institute{A. Boyac{\i} \at
              Department of Industrial Engineering, Bogazi\c{c}i University, Istanbul, Turkey \\
              \email{arman.boyaci@boun.edu.tr}           
           \and
           T. Ekim \at
              Department of Industrial Engineering, Bogazi\c{c}i University, Istanbul, Turkey \\
              \email{tinaz.ekim@boun.edu.tr}
           \and
           M. Shalom \at
	          TelHai College, Upper Galilee, 12210, Israel, cmshalom@telhai.ac.il \\
              Department of Industrial Engineering, Bogazi\c{c}i University, Istanbul, Turkey \\
      	    \email{cmshalom@telhai.ac.il}
}


\maketitle

\begin{abstract}
A \emph{co-bipartite chain} graph is a co-bipartite graph in which the neighborhoods of the vertices in each clique can be linearly ordered with respect to inclusion. It is known that the maximum cut problem ($\maxcut$) is $\nph$ in co-bipartite graphs \cite{Bodlaender00onthe}. We consider $\maxcut$ in co-bipartite chain graphs. We first consider the twin-free case and present an explicit solution. We then show that $\maxcut$ is polynomial time solvable in this graph class.

\keywords{Maximum Cut \and Co-bipartite Graphs \and Dynamic Programming \and Chain Graph}
\subclass{68R10 \and 05C85}
\end{abstract}

\section{Introduction}\label{sec:intro}
A \emph{cut} of a graph $G=(V(G),E(G))$ is a partition of $V(G)$ into two subsets $S, \bar{S}$ where $\bar S=V(G)\setminus S$. The \emph{cut-set} of $(S,\bar{S})$ is the set of edges of $G$ with exactly one endpoint in $S$. The (unweighted) maximum cut problem ($\maxcut$) is to find a cut with a maximum size cut-set, of a given graph. $\maxcut$ has applications in statistical physics and circuit layout design \cite{barahona1988application}.

$\maxcut$ is a widely studied problem; it is in fact one of the 21 $\nph$ problems of Karp \cite{K72}. It is shown that $\maxcut$ remains $\nph$ when restricted to the following graph classes: chordal graphs, undirected path graphs, split graphs, tripartite graphs, co-bipartite graphs \cite{Bodlaender00onthe}, unit disk graphs \cite{DK2007} and total graphs \cite{Guruswami1999217}. On the positive side, it was shown that $\maxcut$ can be solved in polynomial-time in planar graphs \cite{hadlock1975finding}, in line graphs \cite{Guruswami1999217} and
the class of graphs factorable to bounded treewidth graphs \cite{Bodlaender00onthe}. The last class includes cographs and bounded treewidth graphs and we describe it in detail in our work.

A \emph{co-bipartite chain} graph is a co-bipartite graph such that the neighborhoods of the vertices in each clique can be linearly ordered with respect to inclusion. It is first introduced in \cite{heggernes2007linear} and in the same paper the authors present a polynomial-time recognition algorithm.

In our recent work \cite{BESZ14-2} we show that every $\boneenpg$ co-bipartite graph contains at most 4 vertices whose removal leave a co-bipartite chain graph. 

In this work we first consider the twin-free co-bipartite chain graphs and show that these graphs are not factorable to bounded treewidth graphs. Therefore, the algorithm described in \cite{Bodlaender00onthe} is not applicable to these graphs. We present a maximum cut of these graphs having a specific structure. We continue with the general case (allowing twins) and propose a polynomial-time algorithm for $\maxcut$ in co-bipartite chain graphs. We now proceed with definitions and preliminary results.

\runningtitle{Graph notations and terms}
Given a simple graph (no loops or parallel edges) $G=(V(G),E(G))$ and a vertex $v$ of $G$, $uv$ denotes an edge between two vertices $u,v$ of $G$. We also denote by $uv$ the fact that $uv \in E(G)$. We denote by $N(v)$ the set of neighbors of $v$. Two adjacent (resp. non-adjacent) vertices $u,v$ of $G$ are \emph{twins} (resp. \emph{false twins}) if $N_G(u) \setminus \set{v} = N_G(v) \setminus \set{u}$. A vertex having degree zero is termed \emph{isolated}, and a vertex adjacent to all other vertices is termed \emph{universal}. For a graph $G$ and $U \subseteq V(G)$, we denote by $G[U]$ the subgraph of $G$ induced by $U$, and $G \setminus U \defined G[V(G) \setminus U]$. For a singleton $X = \set{x}$ and a set $Y$, $Y + x \defined Y \cup \set{x}$ and $Y - x \defined Y \setminus \set{x}$. A vertex set $U \subseteq V(G)$ is a \emph{clique} (resp. \emph{stable set}) (of $G$) if every pair of vertices in $U$ is adjacent (resp. non-adjacent). An \emph{automorphism} on a graph $G$ is a permutation $\pi$ of $V(G)$, such that $uv$ if and only if $\pi(u)\pi(v)$.

\runningtitle{Some graph classes}
A graph is \emph{bipartite} if its vertex set can be partitioned into two independent sets $V,V'$. We denote such a graph as $B(V,V',E)$ where $E$ is the edge set. A graph $G$ is \emph{co-bipartite} if it is the complement of a bipartite graph, i.e. $V(G)$ can be partitioned into two cliques $K, K'$. We denote such a graph as $C(K,K',E)$ where $E$ is the set of edges that have exactly one endpoint in $K$.

A \emph{bipartite chain graph} is a bipartite graph $G=B(V,V',E)$ where $V$ has a nested neighborhood ordering, i.e. its vertices can be ordered as $v_1,v_2,\ldots$  such that $N_G(v_1) \subseteq N_G(v_2) \subseteq \cdots$. $V$ has a nested neighborhood ordering if and only if $V'$ has one \cite{yannakakis1981node}. Theorem 2.3 of \cite{HPS90} implies that if $G=B(V,V',E)$ is a bipartite chain graph with no isolated vertices, then the number of distinct degrees in $V$ is equal to the number of distinct degrees in $V'$.

A co-bipartite graph $G=C(K,K',E)$ is a \emph{co-bipartite chain} (also known as co-chain) graph if $K$ has a nested neighborhood ordering \cite{heggernes2007linear}. Since $K \subseteq N_G(v)$ for every $v \in K$, the result for chain graphs implies that $K$ has a nested neighborhood ordering if and only if $K'$ has such an ordering.

\runningtitle{Cuts}
We denote a cut of a graph $G$ by one of the subsets of the partition. $E(S,\bar{S})$ denotes the \emph{cut-set} of $S$, i.e. the set of the edges of $G$ with exactly one endpoint in $S$, and $\cutsize{S} \defined \abs{E(S,\bar{S})}$ is termed the \emph{cut size} of $S$. A maximum cut of $G$ is one having the biggest cut size among all cuts of $G$. We refer to this size as the \emph{maximum cut size} of $G$. Clearly, $S$ and $\bar{S}$ are dual; we thus can replace $S$ by $\bar{S}$ and $\bar{S}$ by $S$ everywhere. In particular, $E(S,\bar{S})=E(\bar{S},S)$, and $\cutsize{S}=\cutsize{\bar{S}}$. For an automorphism of $G$, $\pi(S)$ is the cut $S'$ such that $v \in S'$ if and only if $\pi(v) \in S$. In other words $S$ and $\pi(S)$ are identical up to automorphism. In particular, $\cutsize{\pi(S)}=\cutsize{S}$. Let $R$ be a random set of vertices where each vertex of $V(G)$ is chosen to $R$ with probability $1/2$, independent of other vertices. It is easy to see that the expected value of $\cutsize{R}$ is $\abs{E(G)}/2$. Therefore, the size of a maximum cut of a graph is at least $\abs{E(G)}/2$. 

\section{Twin-free Co-bipartite Chain Graphs}\label{sec:twinfree}
\subsection{The structure of twin-free co-bipartite chain graphs}
Let $G=C(K,K',E)$ be a twin-free co-chain graph, and let $G_B=B(K,K',E)$ be the corresponding bipartite (chain) graph. Suppose that $G_B$ does not have isolated vertices and let $k$ be the number of distinct degrees of the vertices of $K$, which by Theorem 2.3 of \cite{HPS90} is equal to the number of distinct degrees of the vertices of $K'$. By the nested neighbourhood property, two vertices of $K$ (resp. $K'$) with the same degree are false twins in $G_B$, thus twins in $G$. Since $G$ is twin-free, $K$ (resp. $K'$) consists of exactly $k$ vertices with distinct non-zero degrees. Every such degree is between 1 and $\abs{K'}=k$. By the pigeonhole principle, $K$ (resp. $K'$) has exactly one vertex of each possible degree. Let $u$ (resp. $u'$) be the unique vertex of $K$ (resp. $K'$) with degree $k$ in $G_B$. We observe that $u$ (resp. $u'$) is adjacent to all the vertices of $K'$ (in both $G$ and $G_B$), and that it is also adjacent to all the vertices of $K$ (in $G$). Therefore, $u$ and $u'$ are universal and twins in $G$, contradicting our assumption. We conclude that at least one of $K$ and $K'$ contains a vertex isolated in $G_B$. We assume without loss of generality that $K$ contains such a vertex. Then, for a given $k$, there are two twin-free co-bipartite chain graphs depending on whether $G_B$ has an isolated vertex in $K'$. We denote by $CC_k$ (resp. $CC_k^-$) the graph containing (resp. lacking) such a vertex. Figure \ref{fig:CC6} depicts the graph $CC_k$. Therefore, the class of twin-free co-bipartite chain graphs is $\set{CC_k, CC_k^- | k \in \mathbb{N}}$. Note that $k$ is the cardinality of $K$ and $K'$ excluding the isolated vertices of $G_B$.

Let $v_i$ be the vertex of $K$ having degree $i$ in $G_B$ and $v'_i$ be the vertex of $K'$ having degree $k-i$ in $G_B$. Then $v_iv'_j$ if and only if $j < i$. We term the pair $(v_i,v'_i)$ as the \emph{row} $i$ of $CC_k$ (or $CC_k^-$).
\begin{obs}
~
\begin{enumerate}[i)]
\item{} The permutation $\pi_k$ defined as $\pi_k(v_i) = v'_{k-i}$, $\pi_k(v'_i) = v_{k-i}$ is an automorphism on $CC_k$.
\item {} $CC_k^- = CC_k - v'_k$.
\item {} $v_k$ is universal in $CC_k^-$.
\item {} $CC_{k-1} = CC_k^- - v_k$.
\item {} The permutation $\pi_k^-$ defined as
\[
\pi_k^-(x)=\left\{
\begin{array}{ll}
x & \textrm{if~} x = v_k\\
\pi_{k-1}(x) & \textrm{otherwise}.
\end{array}
\right.
\]
is an automorphism on $CC_k^-$.
\end{enumerate}
\end{obs}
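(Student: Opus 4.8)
The plan is to verify the five items essentially by unwinding the definitions, since the whole Observation records structural facts about the explicit family $\set{CC_k, CC_k^-}$. Throughout I rely on a single adjacency rule: in both $CC_k$ and $CC_k^-$ the sets $K,K'$ are cliques, and a cross edge $v_iv'_j$ is present if and only if $j<i$. First I would fix the vertex labelling implied by the text, namely $K=\set{v_0,\ldots,v_k}$ and $K'=\set{v'_0,\ldots,v'_k}$ with $v_i$ of degree $i$ and $v'_i$ of degree $k-i$ in $G_B$, so that $v_0$ and $v'_k$ are precisely the two vertices isolated in $G_B$.

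For (i) I would check that $\pi_k$ is a bijection (in fact an involution, since $\pi_k(\pi_k(v_i))=v_i$) and then test the three kinds of pairs. Pairs inside $K$ map to pairs inside $K'$ and vice versa, and both are cliques, so adjacency is preserved. For a cross pair, $\pi_k$ sends the pair $v_iv'_j$ to the cross pair $v_{k-j}v'_{k-i}$, and the rule gives an edge there iff $k-i<k-j$, i.e. iff $j<i$, which is exactly the condition for $v_iv'_j$ to be an edge. This is the one place where I must be careful with the index arithmetic, but the computation is short.

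Items (ii)--(iv) are bookkeeping about which vertex is deleted. For (ii) I observe that $v'_k$ is the unique vertex of $K'$ isolated in $G_B$, so deleting it yields exactly the twin-free co-chain graph on $k$ rows lacking a $K'$-isolated vertex, which is $CC_k^-$ by definition. For (iii), in $CC_k^-$ the clique $K'$ is $\set{v'_0,\ldots,v'_{k-1}}$; since $v_kv'_j$ is an edge for every $j<k$ and $v_k$ lies in the clique $K$, the vertex $v_k$ is adjacent to all remaining vertices and is therefore universal. For (iv) I re-read the adjacency rule after deleting $v_k$: on the surviving vertices $v_0,\ldots,v_{k-1},v'_0,\ldots,v'_{k-1}$ the degrees in the bipartite graph become $\deg v_i=i$ and $\deg v'_j=(k-1)-j$, with $v_0$ and $v'_{k-1}$ isolated, which is precisely the defining data of $CC_{k-1}$; matching labels gives the claimed equality.

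Finally, (v) is assembled from the previous items rather than computed directly, and I expect no genuine obstacle. By (iii) the vertex $v_k$ is universal in $CC_k^-$, so every pair involving $v_k$ is an edge and is preserved by any permutation fixing $v_k$. For pairs avoiding $v_k$ I use (iv): $CC_k^- - v_k = CC_{k-1}$, and $\pi_k^-$ restricted to these vertices is exactly $\pi_{k-1}$, which is an automorphism of $CC_{k-1}$ by applying (i) with $k-1$ in place of $k$. Hence adjacency is preserved on both kinds of pairs, and $\pi_k^-$ is an automorphism of $CC_k^-$. The only thing to watch is the index shift from $k$ to $k-1$ in the definition of $\pi_{k-1}$, which is why carrying out (iv) carefully before (v) is convenient.
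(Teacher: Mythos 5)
Your proof is correct and follows essentially the same route as the paper: items (i)--(iv) are verified by unwinding the adjacency rule $v_iv'_j \Leftrightarrow j<i$, and item (v) is deduced from (iii) and (iv) (together with (i) applied at $k-1$), which is exactly the paper's one-line justification that ``the last observation follows from the previous two.'' The paper leaves (i)--(iv) as immediate definitional facts, so your explicit verifications simply fill in what the paper treats as self-evident.
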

The last observation follows from the previous two.

\newcommand{\monoblock}[1]{[#1]}
\newcommand{\monos}{\monoblock{S}}
\newcommand{\monosbar}{\monoblock{\bar{S}}}
\newcommand{\biblock}[2]{[#1-#2]}
\newcommand{\bis}{\biblock{S}{\bar{S}}}
\newcommand{\bisbar}{\biblock{\bar{S}}{S}}

\begin{figure}
\centering
\commentfig{
\scalebox{1} 
{
\begin{pspicture}(0,-1.4)(10.16,1.4)
\psline[linewidth=0.02cm](5.0978127,0.6)(6.0978127,1.0)
\psline[linewidth=0.02cm](5.0978127,0.2)(6.0978127,1.0)
\psline[linewidth=0.02cm](5.0978127,-0.2)(6.0978127,1.0)
\psline[linewidth=0.02cm](5.0978127,-0.6)(6.0978127,1.0)
\psline[linewidth=0.02cm](5.0978127,-1.0)(6.0978127,1.0)
\psline[linewidth=0.02cm](5.0978127,0.2)(6.0978127,0.6)
\psline[linewidth=0.02cm](5.0978127,-0.2)(6.0978127,0.6)
\psline[linewidth=0.02cm](5.0978127,-0.6)(6.0978127,0.6)
\psline[linewidth=0.02cm](5.0978127,-1.0)(6.0978127,0.6)
\psline[linewidth=0.02cm](5.0978127,-0.2)(6.0978127,0.2)
\psline[linewidth=0.02cm](5.0978127,-0.6)(6.0978127,0.2)
\psline[linewidth=0.02cm](5.0978127,-1.0)(6.0978127,0.2)
\psline[linewidth=0.02cm](5.0978127,-0.6)(6.0978127,-0.2)
\psline[linewidth=0.02cm](5.0978127,-1.0)(6.0978127,-0.2)
\psline[linewidth=0.02cm](5.0978127,-1.0)(6.0978127,-0.6)
\psdots[dotsize=0.12](1.4978125,1.0)
\psdots[dotsize=0.12](1.4978125,0.6)
\psdots[dotsize=0.12](1.4978125,0.2)
\psdots[dotsize=0.12](1.4978125,-0.2)
\psdots[dotsize=0.12](1.4978125,-0.6)
\psdots[dotsize=0.12](1.4978125,-1.0)
\psdots[dotsize=0.12](2.4978125,1.0)
\psdots[dotsize=0.12](2.4978125,0.6)
\psdots[dotsize=0.12](2.4978125,0.2)
\psdots[dotsize=0.12](2.4978125,-0.2)
\psdots[dotsize=0.12](2.4978125,-0.6)
\psdots[dotsize=0.12](2.4978125,-1.0)
\psline[linewidth=0.02cm](1.4978125,0.6)(2.4978125,1.0)
\psline[linewidth=0.02cm](1.4978125,0.2)(2.4978125,1.0)
\psline[linewidth=0.02cm](1.4978125,-0.2)(2.4978125,1.0)
\psline[linewidth=0.02cm](1.4978125,-0.6)(2.4978125,1.0)
\psline[linewidth=0.02cm](1.4978125,-1.0)(2.4978125,1.0)
\psline[linewidth=0.02cm](1.4978125,0.2)(2.4978125,0.6)
\psline[linewidth=0.02cm](1.4978125,-0.2)(2.4978125,0.6)
\psline[linewidth=0.02cm](1.4978125,-0.6)(2.4978125,0.6)
\psline[linewidth=0.02cm](1.4978125,-1.0)(2.4978125,0.6)
\psline[linewidth=0.02cm](1.4978125,-0.2)(2.4978125,0.2)
\psline[linewidth=0.02cm](1.4978125,-0.6)(2.4978125,0.2)
\psline[linewidth=0.02cm](1.4978125,-1.0)(2.4978125,0.2)
\psline[linewidth=0.02cm](1.4978125,-0.6)(2.4978125,-0.2)
\psline[linewidth=0.02cm](1.4978125,-1.0)(2.4978125,-0.2)
\psline[linewidth=0.02cm](1.4978125,-1.0)(2.4978125,-0.6)
\psellipse[linewidth=0.02,dimen=outer](1.4978125,0.0)(0.2,1.4)
\psellipse[linewidth=0.02,dimen=outer](2.4978125,0.0)(0.2,1.4)
\psframe[linewidth=0.02,linestyle=dashed,dash=0.16cm 0.16cm,dimen=outer](3.0978124,1.2)(0.8978125,0.8)
\usefont{T1}{ptm}{m}{n}
\rput(0.4103125,1.01){row 0}
\psdots[dotsize=0.2,fillstyle=solid,dotstyle=square](5.0978127,1.0)
\psdots[dotsize=0.2,fillstyle=solid,dotstyle=square](5.0978127,0.6)
\psdots[dotsize=0.2,dotstyle=square*](5.0978127,0.2)
\psdots[dotsize=0.2,dotstyle=square*](5.0978127,-0.2)
\psdots[dotsize=0.2,dotstyle=square*](5.0978127,-0.6)
\psdots[dotsize=0.2,dotstyle=square*](5.0978127,-1.0)
\psdots[dotsize=0.2,fillstyle=solid,dotstyle=square](6.0978127,1.0)
\psdots[dotsize=0.2,fillstyle=solid,dotstyle=square](6.0978127,0.6)
\psdots[dotsize=0.2,fillstyle=solid,dotstyle=square](6.0978127,0.2)
\psdots[dotsize=0.2,fillstyle=solid,dotstyle=square](6.0978127,-0.2)
\psdots[dotsize=0.2,dotstyle=square*](6.0978127,-0.6)
\psdots[dotsize=0.2,dotstyle=square*](6.0978127,-1.0)
\psellipse[linewidth=0.02,dimen=outer](5.0978127,0.0)(0.2,1.4)
\psellipse[linewidth=0.02,dimen=outer](6.0978127,0.0)(0.2,1.4)
\psframe[linewidth=0.02,linestyle=dashed,dash=0.16cm 0.16cm,dimen=outer](6.6978126,0.36)(4.4978123,-0.34)
\usefont{T1}{ptm}{m}{n}
\rput(8.1625,0.19){a bichromatic block }
\usefont{T1}{ptm}{m}{n}
\rput(7.923281,-0.15){of type $\bisbar$}
\psframe[linewidth=0.02,linestyle=dashed,dash=0.16cm 0.16cm,dimen=outer](6.6978126,-0.44)(4.4978123,-1.16)
\usefont{T1}{ptm}{m}{n}
\rput(8.4225,-0.61){a monochromatic block }
\usefont{T1}{ptm}{m}{n}
\rput(7.6632814,-0.95){of type $\monosbar$}
\end{pspicture} 
}}
\caption{The graph $CC_5$ and a maximum cut of it.}
\label{fig:CC6}
\end{figure}
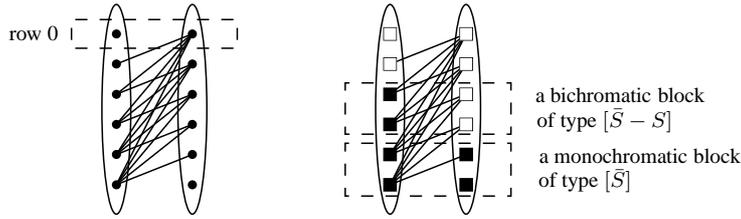

We partition the edges of $G \in \set{CC_k, CC_k^-}$ as: a) \emph{clique edges} of $K$ (resp. $K'$), i.e. edges with both endpoints in $K$ (resp. $K'$) b) \emph{diagonal edges}, i.e. edges $v_iv'_j$ with $j < i$. We note that there are $k^2/2+O(k)$ clique edges of $K$ (resp. $K'$) and $k^2/2+O(k)$ diagonal edges. Therefore, $\abs{E(G)}=3k^2/2+O(k)$, and the size of a maximum cut is at least $3k^2/4+O(k)$. On the other hand, any cut may contain at most half of the clique edges and this is achieved when the cut partitions each clique into two sets of equal size. Therefore, every cut contains at most $k^2+O(k)$ edges. We will show that the maximum cut size of $G$ is $5k^2/6+O(k)$.

\subsection{Inapplicability of known algorithms}

In this section we show that the known algorithms for line graphs \cite{Guruswami1999217} and the class of graphs factorable to bounded treewidth graphs \cite{Bodlaender00onthe} are not applicable to cobipartite chain graphs.

Given a graph $G$, its line graph $L(G)$ is a graph such that each vertex of $L(G)$ represents an edge of $G$, and two vertices of $L(G)$ are adjacent if and only if their corresponding edges share a common endpoint in $G$. It is shown in \cite{Beineke1970129} that $K_5 - e$, the graph obtained by the removal of an edge of $K_5$, is a forbidden subgraph of line graphs. It is easy to verify that $CC_k$ contains $K_5 - e$ as an induced subgraph whenever $k \geq 3$. Therefore, cobipartite graphs are not included in the class of line graphs.

In \cite{Bodlaender00onthe}, a polynomial-time algorithm for $\maxcut$ is presented for a class of graphs that extends both the class of cographs and the class of bounded treewidth graphs. In this section, we show that twin-free co-bipartite chain graphs are not in this class. Therefore, the algorithm in \cite{Bodlaender00onthe} is not applicable to co-bipartite chain graphs. For completeness, we provide a brief definition of the graph classes under consideration.

Cographs are defined inductively as follows:
\begin{itemize}
\item{} A graph with a single vertex ($K_1$) is a cograph.
\item{} If $G_1$ and $G_2$ are cographs then so are their disjoint union and their complete join, i.e the graph obtained by adding the edges $V(G_1) \times V(G_2)$ to the disjoint union.
\end{itemize}
For a cograph $G$, the above recursive definition implies a tree $C(G)$ termed the \emph{cotree} of $G$, in which every leaf corresponds to a vertex of $G$ and the root of $C(G)$ corresponds to $G$.

This definition is extended as follows: Given a graph $H$ with $r>1$ vertices and graphs $H_1,\ldots,H_r$, the graph $G=H[H_1,\ldots,H_r]$ is built by first taking the disjoint union $H_1 \cup \cdots \cup H_r$ and then adding all the possible edges $V(H_i) \times V(H_j)$ for every edge $ij$ of $H$. The collection of graphs $H,H_1,\cdots,H_r$ is termed a \emph{factorization} of $G$. Every graph $G$ has a trivial factorization $G=G[K_1,\ldots,K_1]$. We are typically interested in factorizations such that $H$ is as small as possible. Given a factorization of $G$, one can recursively factorize each of $H_1,\ldots,H_r$. This recursive definition implies a \emph{factor tree} $F(G)$ for $G$. A leaf of $F(G)$ corresponds to a vertex of $G$ and the root of $F(G)$ corresponds to $G$. For any non-leaf vertex $v$ of the factor tree, the graph $H$ used in the factorization is termed the \emph{label graph} of $v$. For a graph class $\cc$, we denote by $\ff(\cc)$ the class of graphs that have a factor tree with label graphs taken from $\cc$. Then, a cotree is a factor tree where a non-leaf vertex is labeled with either $K_2$ or $\bar{K_2}$, i.e. the class of cographs is $\ff(\set{K_2, \bar{K_2}})$.

For any integer $m$, let $\tw{m}$ be the class of graphs having treewidth at most $m$. It is well known that $K_{m+2}$ has treewitdh $m+1$, thus $K_{m+2} \notin \tw{m}$.\footnote{This is the only fact about treewidth used in this work.}

For any integer $m \geq 1$, $\ff(\tw{m})$ is the class of graphs having a factor tree with label graphs of treewidth at most $m$. In \cite{Bodlaender00onthe}, a polynomial-time algorithm for the class $\ff(\tw{m})$ is provided for every constant $m>0$. We now show that the twin-free co-bipartite chain graphs are not contained in this class.

\begin{theorem}
$CC_{m+2} \notin \ff(\tw{m})$ for every $m>0$.
\end{theorem}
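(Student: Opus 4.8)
The plan is to argue by contradiction: suppose $CC_{m+2} \in \ff(\tw{m})$, so it admits a factor tree with all label graphs of treewidth at most $m$. The key structural fact to exploit is that being in $\ff(\tw{m})$ is a hereditary-flavored property in the following sense: the topmost factorization $CC_{m+2} = H[H_1,\ldots,H_r]$ has $H \in \tw{m}$, and the \emph{module structure} this imposes on $CC_{m+2}$ is very restrictive. In any factorization $H[H_1,\ldots,H_r]$ with $r>1$, each part $V(H_i)$ is a \emph{module} of $G$ (every vertex outside $V(H_i)$ is adjacent either to all of $V(H_i)$ or to none of it), and the quotient obtained by contracting each module is exactly $H$. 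So the first real step is to understand the modules of $CC_{m+2}$.

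The central claim I would isolate and prove is that $CC_k$ has essentially no nontrivial modules, i.e. every module is either a single vertex or (almost) all of $V(CC_k)$. This should follow from the twin-free structure already established: in the notation of the excerpt, $v_iv'_j$ iff $j<i$, and the degrees in each clique are all distinct, so for any two rows $i \neq i'$ there is a vertex that distinguishes $v_i$ from $v_{i'}$ (some $v'_j$ adjacent to exactly one of them), and similarly across the two cliques. A module $M$ containing two vertices that are distinguished by some outside vertex $w$ would force $w \in M$; chasing this through the linear neighborhood ordering should show $M$ must swallow everything. The upshot is that the only nontrivial factorization available at the top is the trivial-type one where $H$ has $r = |V(CC_{m+2})|$ vertices and each $H_i = K_1$ — meaning $H = CC_{m+2}$ itself up to isomorphism.

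Once modules are pinned down, the contradiction comes from treewidth directly. Since the top label graph $H$ must (up to the trivial singleton refinement) be $CC_{m+2}$ itself, and $CC_{m+2}$ contains $K_{m+2}$ as a subgraph — indeed $K$ is a clique of size $m+2$ — we get $\mathrm{tw}(H) \geq \mathrm{tw}(K_{m+2}) = m+1 > m$, using the one treewidth fact the excerpt flags, namely $K_{m+2} \notin \tw{m}$. This contradicts $H \in \tw{m}$ and finishes the argument. If a nontrivial module genuinely existed, I would instead have to track how the clique $K$ of size $m+2$ distributes among the modules and the quotient; but a clique that large cannot be broken up without either leaving $\geq m+2$ of its vertices inside one module (whose own label graph then inherits a $K_{m+2}$) or spreading them across $\geq m+2$ modules (so the quotient $H$ contains $K_{m+2}$). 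Either way some label graph contains $K_{m+2}$ and has treewidth $> m$.

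The main obstacle I anticipate is the module-analysis step: making rigorous that the clique $K_{m+2}$ cannot be ``diluted'' through the factor tree so as to keep every individual label graph below treewidth $m$. The clean way to handle this is to prove the general lemma that if $G$ contains a clique of size $t$, then any factor tree for $G$ has a label graph containing a clique of size at least $t$ — since contracting modules can only merge clique vertices into the quotient as a clique, and a clique never splits across modules without its image remaining a clique in the quotient. Applying this with $t = m+2$ to the clique $K \subseteq V(CC_{m+2})$ immediately yields a label graph with a $K_{m+2}$, hence treewidth $\geq m+1$, contradicting membership in $\ff(\tw{m})$.
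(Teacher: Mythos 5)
Your overall plan---observe that the parts of a factorization $H[H_1,\ldots,H_r]$ are modules of the graph, show that $CC_{m+2}$ has no nontrivial modules (at least none containing two distinct vertices of the clique $K$), and conclude that the root label graph $H$ contains a clique of size $m+2$, hence $H \notin \tw{m}$---is exactly the paper's route. The paper's entire proof consists of the ``chase'' that you leave as a sketch: assuming two vertices $v_i,v_j \in K$ lie in a common factor $H_1$, it uses the adjacencies with $K'$ to force first $\set{v'_i,\ldots,v'_{m+2}} \subseteq V(H_1)$, then $V(H_1)=V(CC_{m+2})$, contradicting $r>1$; hence the $m+2$ vertices of $K$ lie in distinct factors, and their images are pairwise adjacent in $H$, giving a $K_{m+2}$ in $H$. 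Up to this point your proposal and the paper agree.

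The genuine gap is in what you offer as the rigorous substitute for that chase. The ``general lemma''---if $G$ contains a clique of size $t$ then every factor tree of $G$ has a label graph containing a clique of size at least $t$---is false, and so is the dichotomy in your third paragraph. Counterexample: $K_t$ itself. It is a cograph, $K_t = K_2[K_{t-1},K_1]$, and iterating this factorization gives a factor tree all of whose label graphs are $K_2$; no label graph contains even a triangle, yet the graph contains $K_t$. The flaw in your justification is that when several clique vertices lie in one module, the quotient sees them as a \emph{single} vertex, so the clique's size is split between the quotient and the modules' subtrees and is not preserved at any single level of the tree; in particular a clique can be broken into fewer than $m+2$ modules each holding fewer than $m+2$ of its vertices, the case your dichotomy omits. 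This is also why the theorem cannot follow from the mere presence of $K_{m+2}$ in $CC_{m+2}$: complete graphs, and indeed all cographs, do lie in $\ff(\tw{1})$. Consequently the proof as proposed does not close: everything rests on the module analysis specific to the structure of $CC_{m+2}$, which you correctly anticipate as the main obstacle but do not carry out. Carrying it out---as the paper does---is unavoidable, and once done, the (false) general lemma is unnecessary.
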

\begin{proof}
Consider the root of a factor tree of $CC_{m+2}=C(K,K',E)$. Let $CC_{m+2}=H[H_1,\ldots,H_r]$, and $V(H)=\set{h_1,\ldots,h_r}$ with $r>1$. We will show that the vertices $v_i \in K$ are in distinct graphs among $H_1,\ldots,H_r$. Assume by contradiction that there exist two distinct vertices $v_i, v_j \in K \cap V(H_1)$ such that $i<j$. If some $h_\ell$ is adjacent to $h_1$ in $H$ then $V(H_\ell) \cap K' \subseteq \set{v'_0,\ldots,v'_{i-1}}$. This is because every vertex of $H_\ell$ is adjacent to $v_i$. If some $h_\ell$ is non-adjacent to $h_1$ in $H_0$ then $V(H_\ell) \cap K' \subseteq \set{v'_j,\ldots,v'_{m+2}}$. This is because every vertex of $H_\ell$ is non-adjacent to $v_j$. Therefore,
\[
\left( \cup_{\ell=2}^k V(H_\ell) \right) \cap K' \subseteq \set{v'_0,\ldots,v'_{i-1}} \cup \set{v'_j,\ldots,v'_{m+2}}.
\]
Then $\set{v'_i,\ldots,v'_{j-1}} \subseteq V(H_1)$ and in particular, $v'_i \in V(H_1)$. We now use this fact to prove a stronger property.  Since every vertex of $CC_{m+2}$ is adjacent to either $v_i$ or $v'_i$, every vertex $h_\ell$ of $H$ is adjacent to $h_1$, i.e. $h_1$ is universal in $H$. Therefore, the set of non-neighbours of $h_1$ in $H$ considered in our last argument is empty. Then $V(H_\ell) \cap K' \subseteq \set{v'_0,\ldots,v'_{i-1}}$ for every $\ell \neq 1$ and we conclude that $\set{v'_i,\ldots,v'_j,\ldots,v'_{m+2}} \subseteq V(H_1)$. Since $H_1$ contains two distinct vertices of $K'$, by symmetry we get $\set{v_0,\ldots,v_j} \subseteq V(H_1)$.

Suppose that $j < m+2$, and let $n \in [j+1,m+2]$. Let also $H_n$ be the unique graph among $H_1,\ldots,H_k$ containing $v_n$. Since $h_1$ is universal, $h_n$ is adjacent to $h_1$ in $H$. Then, $v_n$ is adjacent to $v'_{m+2} \in V(H_1)$, implying that $n > m+2$, a contradiction. Therefore, $j=m+2$ and by symmetry $i=1$. We conclude that $H_1=CC_{m+2}$ implying that $k=1$, contradicting the definition of factorization. Therefore, $H_1$ does not contain two distinct vertices $v_i,v_j \in K$. Since $H_1$ is chosen arbitrarily, this holds for every graph $H_\ell$.

Therefore, there are $m+2$ distinct graphs among $H_1,\ldots,H_r$ each of which contains a vertex $v_i \in K$. Since these vertices are pairwise adjacent in $CC_{m+2}$, the $m+2$ vertices corresponding to these graphs are pairwise adjacent in $H$, i.e. they constitute a clique of size $m+2$ in $H$. Therefore, $H \notin \tw{m}$. Since $H$ is chosen as the label of the root of an arbitrary factor tree of $CC_{m+2}$, we conclude that $CC_{m+2} \notin \ff(\tw{m})$.
\qed
\end{proof}

\subsection{The structure of maximum cuts}
We now analyze the structure of a maximum cut of $G$. Let $k$ be such that $G \in \set{CC_k,CC_k^-}$.
Given a cut $(S,\bar{S})$ of $G$, a given row is exactly in one of the sets $S \times S, \bar{S} \times \bar{S}, S \times \bar{S}, \bar{S} \times S$ that we term \emph{row types}. A row is \emph{monochromatic} if it is of one of the first two types and \emph{bi-chromatic} otherwise.
A \emph{block} is a maximal consecutive sequence of rows of the same type. The type of a block is the type of its rows. We denote a monochromatic block as $\monos$ or $\monosbar$ and a bi-chromatic block as $\bis$ or $\bisbar$. The \emph{length} of a block is the number of its rows.

For a cut $S$, $\swap{S}{i}$ is the cut obtained by exchanging the types of the rows $i$ and $i+1$. Formally, let $A=\set{v_i,v_{i+1},v'_i,v'_{i+1}}$ and $B=K \cup K' \setminus A$. Then a)
$\swap{S}{i} \cap B = S \cap B$, b) $v_i \in \swap{S}{i}$ if and only if $v_{i+1} \in S$, c) $v_{i+1} \in \swap{S}{i}$ if and only if $v_i \in S$, d) $v'_i \in \swap{S}{i}$ if and only if $v'_{i+1} \in S$, and e) $v'_{i+1} \in \swap{S}{i}$ if and only if $v'_i \in S$. We note that $\abs{\swap{S}{i} \cap K}=\abs{S \cap K}$ and $\abs{\swap{S}{i} \cap K'}=\abs{S \cap K'}$, i.e. the swap operation preserves the number of clique edges of $S$. In addition, since $\swap{S}{i} \cap B = S\cap B$, all diagonal edges except possibly $v_{i+1}v'_i$ are preserved as well. Therefore, the effect of the swap operation is exactly its effect on the diagonal edge $v_{i+1}v'_i$. The following lemma summarizes this effect.
\begin{lemma}\label{lem:swap}
Let $S$ be a cut and $i,i+1$ two consecutive rows. Then,
\begin{enumerate}[i)]
\item if both rows are monochromatic or both are bi-chromatic then $\cutsize{\swap{S}{i}} = \cutsize{S}$.
\item Otherwise, (i.e. if one row is monochromatic and the other is bi-chromatic) there is a vertex $x \in \set{v_i,v_{i+1},v'_i,v'_{i+1}}$ that is separated from the other three by the cut $S$. Then
\[
\cutsize{\swap{S}{i}} = \cutsize{S} + \left\{
\begin{array}{ll}
1 & \textrm{if~}x \in \set{v_i,v'_{i+1}}\\
-1 & \textrm{otherwise}.
\end{array}
\right.
\]
\end{enumerate}
\end{lemma}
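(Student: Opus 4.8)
The plan is to exploit the reduction established in the paragraph just before the statement: since the swap preserves the number of clique edges of $S$ and every diagonal edge other than $v_{i+1}v'_i$, the difference $\cutsize{\swap{S}{i}}-\cutsize{S}$ equals exactly the change in the cut-status of the single edge $v_{i+1}v'_i$. Thus the whole lemma reduces to tracking one edge. For a vertex $x$ let $\sigma(x)$ denote the side of the cut $S$ containing $x$. The defining rules of the swap put, in $\swap{S}{i}$, the vertex $v_{i+1}$ on the side $\sigma(v_i)$ and the vertex $v'_i$ on the side $\sigma(v'_{i+1})$. Hence I would record two facts and then merely compare them: $v_{i+1}v'_i$ lies in the cut-set of $S$ iff $\sigma(v_{i+1})\neq\sigma(v'_i)$, and it lies in the cut-set of $\swap{S}{i}$ iff $\sigma(v_i)\neq\sigma(v'_{i+1})$.

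For part (i) I would simply substitute the row relations. If both rows are monochromatic then $\sigma(v'_i)=\sigma(v_i)$ and $\sigma(v'_{i+1})=\sigma(v_{i+1})$, so both conditions above read $\sigma(v_i)\neq\sigma(v_{i+1})$; if both rows are bi-chromatic then $\sigma(v'_i)$ is opposite to $\sigma(v_i)$ and $\sigma(v'_{i+1})$ is opposite to $\sigma(v_{i+1})$, so both conditions read $\sigma(v_i)=\sigma(v_{i+1})$. In either case the two conditions coincide, the status of $v_{i+1}v'_i$ is unchanged, and $\cutsize{\swap{S}{i}}=\cutsize{S}$.

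For part (ii) the cleanest route is to argue through the separated vertex rather than enumerate all placements. When exactly one of the two rows is bi-chromatic, a count of the two sides shows that among $v_i,v'_i,v_{i+1},v'_{i+1}$ exactly one vertex $x$ lies alone on its side while the other three share the opposite side; this is the $x$ named in the statement. With only one vertex isolated, an edge among these four is in the cut-set precisely when exactly one of its endpoints is $x$. Applied via the two facts above this yields: $v_{i+1}v'_i$ is cut in $S$ iff $x\in\set{v_{i+1},v'_i}$, and it is cut in $\swap{S}{i}$ iff $x\in\set{v_i,v'_{i+1}}$. These two pairs are disjoint and $x$ lies in exactly one of them, so the edge is gained when $x\in\set{v_i,v'_{i+1}}$ and lost when $x\in\set{v_{i+1},v'_i}$, which is exactly the $+1/-1$ split asserted.

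I do not expect a genuine obstacle here: once the problem is reduced to the lone edge $v_{i+1}v'_i$, everything is a finite check. The only place a naive approach could become unwieldy is part (ii), where a brute-force enumeration produces eight sub-cases; the mild cleverness is the observation that ``$v_{i+1}v'_i$ is cut'' is equivalent to ``exactly one endpoint of $v_{i+1}v'_i$ is the unique separated vertex $x$'', which collapses those sub-cases into the two stated outcomes. I would therefore concentrate the care on two points where a slip is most likely: justifying that in the mixed case exactly one vertex is separated, and matching the membership conditions $x\in\set{v_i,v'_{i+1}}$ and $x\in\set{v_{i+1},v'_i}$ to $\swap{S}{i}$ and $S$ respectively without a sign error.
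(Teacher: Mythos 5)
Your proof is correct and takes essentially the same approach as the paper: the paper also reduces the entire claim to the status of the single diagonal edge $v_{i+1}v'_i$ (via the preservation of clique edges and of all other diagonal edges, established in the paragraph preceding the lemma) and then treats the lemma as an immediate summary of that effect. Your explicit case analysis through the unique separated vertex $x$ merely spells out the finite check that the paper leaves implicit, and it does so without error.
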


For a cut $S$, let $\rotate{S}{i}{j}$ be the cut obtained from $S$ by shifting one row down the type of all rows from $i$ to $j-1$ and replacing the type of row $i$ by the type (before the shift) of row $j$. Formally,
\[
\rotate{S}{i}{j} = \swap{\swap{\swap{\swap{S}{j-1}}{j-2}{\cdots}}{i-1}}{i}.
\]
Let also $\revrotate{S}{i}{j}$ be the cut obtained in the opposite way. Formally $\revrotate{S}{i}{j}$ is the unique cut $S'$ such that $S=\rotate{S'}{i}{j}$.

\begin{lemma}\label{lem:BlockStructure}
Let $G \in \set{CC_k, CC_k^-}$. There exists a maximum cut $S$ of $G$ such that
\begin{enumerate}[i)]
\item{} $S$ contains at most one block from each type, and\label{itm:AtMostOneBlock}
\item{} the (at most four) blocks of $S$ follow the pattern $\Pi=(\monos,\bisbar,\monosbar,\bis)$ where some of the blocks may be empty.\label{itm:OneStructure}
\end{enumerate}
\end{lemma}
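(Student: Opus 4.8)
The plan is to use that a swap preserves all clique edges and every diagonal edge but one (as observed just before Lemma~\ref{lem:swap}), so that rearranging the rows of a cut alters only the number of \emph{cross} diagonal edges, i.e.\ those joining two rows of different type. Writing $s_i=[v_i\in S]$, $t_i=[v'_i\in S]$ and $D\defined\sum_{j<i}[s_i\neq t_j]$, the cut size splits as a quantity determined by the \emph{multiset} of row types—the two clique cuts (which depend only on $\abs{S\cap K}$ and $\abs{S\cap K'}$) together with the equal-type diagonal pairs—plus $D$. Let $c(\tau,\sigma)$ denote the change in $D$ caused by swapping an adjacent upper row of type $\tau$ with a lower row of type $\sigma$; by Lemma~\ref{lem:swap} it is $0$ when $\tau,\sigma$ have the same clique-chromaticity and $\pm1$ otherwise. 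The key point (and the main obstacle) is that the induced ``which type belongs above which'' preference is \emph{cyclic}: $\monos$ above $\bisbar$, $\bisbar$ above $\monosbar$, $\monosbar$ above $\bis$, and $\bis$ above $\monos$, with $\set{\monos,\monosbar}$ and $\set{\bis,\bisbar}$ indifferent. Because this has no consistent total order, a maximum cut cannot simply be sorted into $\Pi$.

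For part~\ref{itm:AtMostOneBlock} I would start from any maximum cut and suppose some type $Z$ occupies two blocks; let $W$ be the rows strictly between two consecutive $Z$-blocks of sizes $a$ and $b$, and put $\Sigma=\sum_{w\in W}c(Z,\mathrm{type}(w))$. Merging the two blocks by sliding the upper one down past $W$ changes $D$ by $+a\Sigma$, and by sliding the lower one up past $W$ by $-b\Sigma$ (each is a sequence of swaps, so Lemma~\ref{lem:swap} applies and $\abs{S\cap K},\abs{S\cap K'}$ are untouched). Maximality forces both changes to be $\le0$, hence $\Sigma=0$ and either merge is size-preserving. Iterating removes all repeated blocks and yields at most one block per type.

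For part~\ref{itm:OneStructure} I would view the block order as a linear-ordering problem. Denote by $n_1,n_2,n_3,n_4$ the numbers of rows of types $\monos,\bisbar,\monosbar,\bis$; with contiguous blocks the variable part of $D$ is $\sum_{X\text{ above }Y}n_Xn_Y\,[\,s(Y)\neq t(X)\,]$, and any linear order must reverse at least one arc of the $4$-cycle above. The maximum is attained by reversing a \emph{minimum-weight} arc $n_Xn_Y$, and the four ways of doing so are exactly the four cyclic rotations of $(\monos,\bisbar,\monosbar,\bis)$; thus a maximum cut with contiguous blocks is one of these rotations. I would then collapse all four onto $\Pi$ with two size-preserving symmetries: the duality $S\leftrightarrow\bar S$, which exchanges $\monos\leftrightarrow\monosbar$ and $\bis\leftrightarrow\bisbar$ in place, and the automorphism $\pi_k$ of the Observation, which reverses the rows and exchanges $\bis\leftrightarrow\bisbar$. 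A direct check shows these two operations act transitively on the four rotations and carry each to the target pattern $\Pi$.

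The hardest point is precisely the cyclic preference: since ``above'' is non-transitive, there really are maximum cuts—for instance the rotation placing the whole $\bis$-block directly above the $\monos$-block—from which no single size-preserving swap moves toward $\Pi$, so a naive bubble sort is doomed. What rescues the argument is, first, the min-arc/deficit computation that restricts the contiguous maxima to only four candidate orders, and second, the verification that the duality together with $\pi_k$ (resp.\ $\pi_k^-$ for $CC_k^-$) maps those four candidates onto $\Pi$ at no change in cut size. Degenerate cases, where some $n_i=0$, only expose a zero-weight arc and are handled identically, with the row-$0$ vertices (which meet no diagonal edge) placed merely to balance the two cliques.
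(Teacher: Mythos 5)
Your proof is correct, and in part~\ref{itm:OneStructure} it takes a genuinely different route from the paper's. For part~\ref{itm:AtMostOneBlock} the two arguments are essentially the same exchange idea: the paper takes a maximum cut with the fewest separated row pairs, forms $S_1=\revrotate{S}{i}{j-1}$ and $S_2=\rotate{S}{i+1}{j}$, and uses the identity $\cutsize{S}-\cutsize{S_1}=\cutsize{S_2}-\cutsize{S}$ to conclude both are maximum cuts with fewer separated pairs; your block-sliding computation ($+a\Sigma$ versus $-b\Sigma$, with maximality forcing $\Sigma=0$) exploits the same antisymmetry of swap effects, just at block rather than row granularity. For part~\ref{itm:OneStructure} the paper argues locally: if a row of type $\bisbar$ precedes the block $\monos$, rotating it adjacent to that block and performing one swap gains $+1$ by Lemma~\ref{lem:swap}, contradicting maximality; with the normalization $v_0\in S$ this leaves only $\Pi$ and $\Pi'=(\bis,\monos,\bisbar,\monosbar)$, and the dual of $\pi_k(S')$ (resp.\ $\pi_k^-(S')$) converts $\Pi'$ to $\Pi$. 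You instead solve the block-ordering problem globally: the cut size is an order-independent quantity plus the weights of the satisfied arcs of the directed $4$-cycle of preferences, any linear order must break at least one arc, so an optimal order breaks exactly one (of minimum weight) and must be one of the four cyclic rotations of $\Pi$; duality and $\pi_k$ then act transitively on these rotations. Your version buys an explicit description of all optimal contiguous orders and makes precise the cyclic, non-sortable obstruction, which the paper exploits implicitly but never states; the paper's version is shorter, needs no weight comparison, and after the normalization $v_0\in S$ has only two patterns to reconcile.

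Two harmless slips you should still fix. First, as literally written, $D=\sum_{j<i}[s_i\neq t_j]$ counts \emph{all} diagonal cut edges, including equal-type pairs, so your stated split double-counts those; either restrict the sum to pairs of rows of different type or drop the ``equal-type diagonal pairs'' term from the multiset part. Second, the closing aside that the ``row-$0$ vertices'' meet no diagonal edge is wrong for $v'_0$: the diagonal-free vertices are $v_0$ and $v'_k$. Neither error propagates into the argument.
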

\begin{proof}
\ref{itm:AtMostOneBlock})
Two rows of the same type are termed \emph{separated} if there is a row of a different type between them.
It is sufficient to show that there is a maximum cut with no separated pair of rows. Let $S$ be a maximum cut that contains the smallest number of separated row pairs. If this number is zero then $S$ is the claimed cut. Otherwise, $S$ contains two separated rows $i$ and $j$ with no other rows of the same type between them. Let $S_1=\revrotate{S}{i}{j-1}$, i.e. $S=\rotate{S_1}{i}{j-1}$ and let $S_2=\rotate{S}{i+1}{j}$. We observe that $S$ is obtained from $S_1$ and $S_2$ is obtained from $S$ by the same set of swap operations. Therefore, the effect of these operations on the sizes of the respective cuts is the same. Then
\[
\cutsize{S}-\cutsize{S_1}=\cutsize{S_2}-\cutsize{S}
\]
implying $\cutsize{S}=(\cutsize{S_1}+\cutsize{S_2})/2$. Since $S$ is a maximum cut, we conclude that $S_1$ and $S_2$ are maximum cuts too. Both cuts contain at least one separated pair less than $S$, contradicting the way $S$ was chosen.

\ref{itm:OneStructure})
Let $S$  be a maximum cut with at most one block of each type. If $S$ contains no monochromatic blocks then it follows the pattern $\Pi$ with empty monochromatic blocks. Therefore, $S$ contains a monochromatic block. Assume that the first monochromatic block is $\monos$ where the opposite case is symmetric. Let $i$ be the first row of $\monos$. All the rows before $i$ are bi-chromatic. Consider a row $k \in [i-1]$ of type $\bar{S} \times S$. Let $S'=\rotate{S}{k}{i-1}$. Since all the rows involved in the swap operation are bi-chromatic, by Lemma \ref{lem:swap} we have $\cutsize{S'}=\cutsize{S}$. Row $i-1$ of $S'$ is of type $\bar{S} \times S$. Let $S''=\swap{S'}{i-1}$. Then, by Lemma \ref{lem:swap}, $\cutsize{S''}=\cutsize{S'}+1=\cutsize{S}+1$ contradicting the maximality of $S$.
Therefore, all the rows before the block $\monos$ are of type $S \times \bar{S}$. Similarly, we show that all the rows after $\monos$ until the next monochromatic block are of type $\bar{S} \times S$, and all the rows after $\monosbar$ are of type $S \times \bar{S}$. Without loss of generality we assume that $v_0 \in S$. Therefore, if there is only one monochromatic block the only possible block pattern is $\Pi_3=(\bis,\monos,\bisbar)$; if there are two monochromatic blocks only the patterns $\Pi=(\monos,\bisbar,\monosbar,\bis)$ and $\Pi'=(\bis,\monos,\bisbar,\monosbar)$ are possible. Clearly, $\Pi_3$ is a special case of $\Pi'$ where the last block is empty. We now show that $\Pi'$ is equivalent to $\Pi$, i.e. for every cut that follows pattern $\Pi'$, there is a cut with the same size, following pattern $\Pi$. Let $S'$ be a cut following pattern $\Pi'$. If $G=CC_k$ (resp. $G=CC_k^-$) then the dual of $\pi_k(S')$ (resp. $\pi_k^-(S')$) follows the pattern $\Pi$.
\qed
\end{proof}

We are now ready to prove the main theorem of this section.
\begin{theorem}\label{thm:TwinFree}
A twin-free co-bipartite chain graph $G \in \set{CC_k, CC_k^-}$ has a maximum cut $S$ with block pattern $(\monos,\bisbar,\monosbar,\bis)$, block lengths $x,y,z,t$ respectively, and $\cutsize{S}=\frac{5}{6}k^2+O(k)$ where
\[
(x,y,z,t) = \left(\frac{k}{3}, \frac{k}{3}, \frac{k}{3}, 0 \right) + \left( \delta_x, \delta_y, \delta_z, 0 \right)
\]
and $\abs{\delta_x}, \abs{\delta_y}, \abs{\delta_z} \in [0,1]$.
\end{theorem}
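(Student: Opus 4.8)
The plan is to lean on Lemma~\ref{lem:BlockStructure}: it supplies a maximum cut $S$ whose rows split into at most four blocks in the fixed order $\Pi=(\monos,\bisbar,\monosbar,\bis)$, so $S$ is determined by the four block lengths $x,y,z,t$ with $x+y+z+t=k+O(1)$. Consequently the maximum cut size of $G$ equals the maximum of $\cutsize{S}$ over all such quadruples, and the task reduces to writing $\cutsize{S}$ as an explicit function $f(x,y,z,t)$ and maximising it over $x+y+z+t=k+O(1)$, $x,y,z,t\ge 0$.

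First I would evaluate the two edge families separately. From the block structure, $v_i\in S$ exactly in the blocks $\monos$ and $\bis$ and $v'_i\in S$ exactly in $\monos$ and $\bisbar$, so $\abs{S\cap K}=x+t$ and $\abs{S\cap K'}=x+y$; hence the number of cut clique edges is $(x+t)(y+z)+(x+y)(z+t)$. For the diagonal edges I would count, for each $v_i$, how many $v'_j$ with $j<i$ lie on the opposite side, using that the $v'_j$ in $S$ are exactly those in the first $x+y$ rows. The blocks $\monos,\bisbar,\monosbar,\bis$ then contribute $0$, $xy+\tfrac{y^2}{2}$, $z(x+y)$, and $zt+\tfrac{t^2}{2}$ respectively, giving
\[
D \;=\; xy+\tfrac{y^2}{2}+xz+yz+zt+\tfrac{t^2}{2}+O(k).
\]
I expect this diagonal count to be the main obstacle: it is genuinely order-sensitive, and one must track the $O(k)$ boundary terms incurred when sums such as $\sum_i (i-1)$ are replaced by their quadratic leading part.

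Combining the two contributions and eliminating one variable via $x+y+z+t=k+O(1)$ turns $f$, up to an additive $O(k)$, into a quadratic form. Setting $\xi=x/k$, $\eta=y/k$, $\zeta=z/k$, $\tau=t/k$ and eliminating $\xi$, a short computation should yield the normalised objective
\[
C(\eta,\zeta,\tau)=2\eta+3\zeta+\tau-\tfrac{3}{2}\eta^2-3\zeta^2-\tfrac{1}{2}\tau^2-3\eta\zeta-\eta\tau-2\zeta\tau .
\]
Its Hessian is negative definite (the matrix $-\nabla^2 C$ has positive leading minors $3,9,3$), so $C$ is strictly concave, and its unique stationary point is $(\eta,\zeta,\tau)=(\tfrac13,\tfrac13,0)$, i.e.\ $(\xi,\eta,\zeta,\tau)=(\tfrac13,\tfrac13,\tfrac13,0)$, with value $C=\tfrac56$. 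Because this maximiser already lies on the face $\tau=0$, strict concavity forces the optimal $t$ to be exactly $0$ — precisely the vanishing fourth coordinate claimed — and evaluating $f$ there gives $\cutsize{S}=\tfrac56 k^2+O(k)$.

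It remains to pass from the real optimum to integer block lengths. I would round $(\tfrac{k}{3},\tfrac{k}{3},\tfrac{k}{3})$ to integers $x,y,z$ summing to $k+O(1)$ while keeping $t=0$; each coordinate moves by less than one, so $\abs{\delta_x},\abs{\delta_y},\abs{\delta_z}\in[0,1]$. Since the gradient of $f$ has magnitude $O(k)$, an $O(1)$ perturbation of its arguments changes $f$ by only $O(k)$, so the rounded cut still has size $\tfrac56 k^2+O(k)$ and, by the reduction of the first paragraph, is a maximum cut of $G$ with the asserted block pattern $(\monos,\bisbar,\monosbar,\bis)$ and block lengths. A secondary point to be careful about is confirming that fixing $t=0$ loses nothing, which is exactly what the strict concavity together with the location of the stationary point on $\tau=0$ delivers.
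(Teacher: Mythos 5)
Your reduction to a quadratic program in the block lengths via Lemma~\ref{lem:BlockStructure} is exactly the paper's strategy, and your continuous analysis is correct: the clique and diagonal edge counts agree with the paper's up to $O(k)$, the reduced form $C(\eta,\zeta,\tau)$ is computed correctly, its Hessian is indeed negative definite (minors $3,9,3$), and the unique stationary point $(\tfrac13,\tfrac13,0)$ with value $\tfrac56$ matches the paper's fractional optimum $\vect{v^*}=\left(\tfrac{k}{3}+\tfrac12,\tfrac{k}{3},\tfrac{k}{3}+\tfrac12,0\right)$ for $CC_k$.

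The gap is in your last paragraph. The theorem is a statement about an \emph{exactly} maximum cut: the optimal integral block lengths must satisfy $t=0$ and deviate from $\left(\tfrac{k}{3},\tfrac{k}{3},\tfrac{k}{3}\right)$ by at most $1$. What you actually prove is that your rounded point has $f$-value within $O(k)$ of the continuous (hence of the integral) optimum; the conclusion that this rounded cut ``is a maximum cut of $G$'' is a non sequitur. A cut within $O(k)$ of optimal need not be optimal, and near-optimality of one candidate says nothing about where the true integral optimum lies --- a priori it could have $t\geq 1$ or block lengths far from $k/3$, which is precisely what the theorem rules out. The paper closes this gap with a quantitative rounding step: restricting to the line $z=x$, $t=0$, it shows $f(\vect{v^*})-f\bigl(\vect{v^*}+\varepsilon(1,-2,1,0)\bigr)=3\varepsilon^2\leq \tfrac34<1$ for $\abs{\varepsilon}\leq\tfrac12$, so the integral point $\vect{\hat v}=([x^*],\,k+1-2[x^*],\,[x^*],\,0)$ lies strictly within $1$ of the fractional optimum; since $f$ is integer-valued at integer points, $f(\vect{\hat v})=\lfloor f(\vect{v^*})\rfloor$ is the exact integral optimum. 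Your own strict-concavity observation could be pushed toward such a conclusion --- near the stationary point an $O(1)$ perturbation costs only $O(1)$, not $O(k)$ as you claim, since the constrained gradient vanishes there --- but the sub-$1$ estimate must actually be carried out; without it the structural claims $t=0$ and $\abs{\delta_x},\abs{\delta_y},\abs{\delta_z}\leq 1$ are unproven. A second, smaller omission: you treat $CC_k$ and $CC_k^-$ uniformly through the slack ``$k+O(1)$,'' whereas the exact statement requires handling $CC_k^-$ separately (the paper does this in the appendix, with an objective that depends on which side of the cut contains $v_k$, a different fractional optimum, and a rounding argument with a case analysis on $k \bmod 3$); the $O(1)$ slack is harmless for the value $\tfrac56 k^2+O(k)$ but not for the exact block-length claims.
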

\begin{proof}
By Lemma \ref{lem:BlockStructure}, there is a maximum cut $S$ following the pattern $(\monos,\bisbar,\monosbar,\bis)$. We first consider the case $G=CC_k$. The number of clique edges of $S$ is $(y+z)(x+t) + (x+y)(z+t)$, the number of intra-block diagonal edges is $y(y-1)/2 + t(t-1)/2$, and the number of inter-block diagonal edges is $xy + yz + xz + zt$. By letting $\cutsize{S} = f(x,y,z,t)$ the problem boils down to solving the following system consisting of a quadratic objective function with a single linear equality constraint.
\begin{equation}
\begin{array}{llcl}
\text{maximize} & f(x,y,z,t) & = & (y+z)(x+t) + (x+y)(z+t) \\
    		 & & & + \frac{y}{2}(y-1) + \frac{t}{2}(t-1) + xy + yz + xz + zt \\
\text{subject to~} & x+y+z+t & = & k+1 \\
    		& x,y,z,t & \in & \mathbb{N}\cup \set{0}. \label{eqn:IQP}
\end{array}
\end{equation}
We relax the integrality constraints of (\ref{eqn:IQP}) and calculate the following optimal (fractional) solution $\vect{v^*}$ of the new system \cite{WolframAlphaCCk}.
\[
\vect{v^*} = (x^*,y^*,z^*,t^*) = \left( \frac{k}{3}+\frac{1}{2},~\frac{k}{3},~\frac{k}{3}+\frac{1}{2},~0 \right).
\]
Let $X=\set{\vect{v}=(x,y,z,t)|x+y+z+t=k+1,z=x,t=0}$. Clearly, $\vect{v^*} \in X$.
In the rest of the proof we round $\vect{v^*}$ solution to an optimal integral solution $\vect{\hat{v}} \in X$. We show the optimality of $\vect{\hat{v}}$ by showing that $f(\vect{v^*})-f(\vect{\hat{v}})<1$. Since $f(\vect{\hat{v}})$ is integral whenever $\vect{\hat{v}}$ is integral, this will imply the optimality of $\vect{\hat{v}}$.

Let $\vect{\delta}=\varepsilon (1, -2, 1, 0)$ for some $\varepsilon \in \mathbb{R}$.
Whenever $\vect{v}=(x,y,z,t) \in X$ we have $\vect{v}+\vect{\delta} \in X$, and
\[
f(\vect{v})=f(x,y,z,t)=3x^2+4yx+\frac{y(y-1)}{2}=h(x,y).
\]
Therefore,
\[
f(\vect{v}) - f(\vect{v}+\vect{\delta}) = h(x,y)-h(x+\varepsilon,y-2\varepsilon)=\varepsilon(2x-2y-1+3\varepsilon).
\]
Since $\vect{v^*} \in X$ and $x^*-y^*=1/2$, by substituting in the above equation we get
\[
f(\vect{v^*}) - f(\vect{v^*}+\vect{\delta}) = 3 \varepsilon^2.
\]
For $\abs{\varepsilon} \leq 1/2$ we have $f(\vect{v^*}) - f(\vect{v^*}+\vect{\delta}) < 1$. Therefore $\vect{\hat{v}}=([x^*],k+1-2[x^*],[x^*],0)$ is an optimal integral solution. The value of the optimum is
\[
\cutsize{S}=f(\vect{\hat{v}})=\lfloor f(\vect{v^*}) \rfloor = \lfloor h(x^*, y^*) \rfloor = \left \lfloor \frac{5}{6} k^2 - \frac{3}{2}k + \frac{3}{4} \right \rfloor.
\]
The rest of the proof proceeds similarly for the case of $G=CC_k^-$ and is given in the appendix.
\qed
\end{proof}

\section{The General Case}\label{sec:general}
\newcommand{\zero}{\vect{0}}
\newcommand{\one}{\vect{1}}
\newcommand{\sumvect}[1]{\sum \vect{#1}}
\newcommand{\cochaindynalg}{\textsc{CoChainDynamicProgramming}}
\newcommand{\ksdynalg}{\textsc{KSDecompDynamicProgramming}}

In this section we consider the general case, i.e. graphs that possibly contain twins. Let $G^T$ be a graph possibly containing twins, and let $G$ be a graph obtained from $G^T$ by contracting every set of twins to a single vertex.  The \emph{instances} of a vertex $v \in G$ denoted by $I(v)$ is the set of twins of $G^T$ contracted to $v$. The \emph{multiplicity} of a vertex $v \in G$ is the number of its instances and denoted by $m(v)$. The graph $G^T$ is uniquely defined (up to isomorphism) by the graph $G$ and the multiplicity function $m:V(G) \rightarrow \mathbb{N}$.

For a cut $S$ of $G^T$ and a vertex $v \in V(G)$, $S(v)$ is the number of instances of $v$ in $S$, and $\bar{S}(v) = m(v) - S(v) $ is the number of instances of $v$ in $\bar{S}$. Clearly, $0 \leq S(v), \bar{S}(v) \leq m(v)$, and the cut $S$ is uniquely defined by the cut function $S:V(G) \rightarrow \mathbb{N}$.

Through this section $G^T=C(K,K',E)$ is a co-bipartite chain graph, and $G \in \set{CC_k, CC_k^-}$ is a twin-free co-bipartite chain graph obtained from it by contracting its twins. We will assume without loss of  generality, that $G = CC_k$, since if $G = CC_k^-$ we can set the multiplicity of $v'_k$ to zero. We denote by $G_i$ the subgraph of $G$ induced by the vertices $\set{v_0, \ldots, v_i} \cup \set{v'_0, \ldots, v'_i}$, and by $G^T_i$ the subgraph of $G_T$ induced by the instances of the vertices of $G_i$. Clearly, $G_i$ is a $CC_i$.
Furthermore, we use the vectors $\vect{m}$, $\vect{m}'$, $\vect{s}$, $\vect{s}'$, $\bar{\vect{s}}$, $\bar{\vect{s}}'$ where $m_i \defined m(v_i)$, $m'_i \defined m(v'_i)$, $s_i \defined S(v_i)$, $s'_i \defined S(v'_i)$ and $\bar{\vect{s}}=\vect{m}-\vect{s}$, $\bar{\vect{s}}'=\vect{m}'-\vect{s}'$. We represent the cut $S$ by the pair of vectors $\vect{s},\vect{s}'$. We denote by $\sumvect{x}$ the sum of the entries of a vector $\vect{x}$.

\begin{theorem}\label{thm:recurrenceRelation}
The maximum cut size of a co-bipartite chain graph $G^T$ given by multiplicity vectors $\vect{m}$ and $\vect{m'}$ is
\begin{equation}
\max \set{F_k(x, x')|~0 \leq x \leq \sumvect{m},~0 \leq x' \leq \sumvect{m'}}, \label{eqn:DynamicProggrammingFinal}
\end{equation}
where $F_i(x, x')$ is given by:
\begin{eqnarray}
& F_{-1}(x,x') = & 0,\label{eqn:RecurrenceBase} \\
& F_i(x, x') = & m'_i \cdot x' + m_i (x+x') \nonumber + \\
& & \max_{\substack{ L_i \leq s_i \leq U_i \\ L'_i \leq s'_i \leq U'_i  }}
\begin{cases} F_{i-1}(x - s_i, x' - s'_i) +
 s'_i \left( \sum_{j=0}^{i-1}m'_j - m_i - 2 x'\right)\\
+ s_i \left( \displaystyle \sum_{j=0}^{i-1}(m_j+m'_j) - 2 (x+x') \right) + (s_i + s'_i)^2
\end{cases}\label{eqn:RecurrenceStep}
\end{eqnarray}
with $L_i = \max \left(0, x - \displaystyle \sum_{j=0}^{i-1} m_j\right)$, $U_i = \min \left(m_i, x \right)$, $L'_i = \max \left(0, x' - \displaystyle \sum_{j=0}^{i-1} m'_j \right)$, and $U'_i = \min \left( m'_i, x' \right)$.
\end{theorem}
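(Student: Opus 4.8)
The plan is to read $F_i(x,x')$ as a constrained optimum and then prove the recurrence by optimal substructure together with a single algebraic identity. Concretely, I would define $F_i(x,x')$ to be the maximum cut size of $G^T_i$ taken over all cuts $S$ that place exactly $x$ of the instances of $v_0,\ldots,v_i$ (the $K$-side) and exactly $x'$ of the instances of $v'_0,\ldots,v'_i$ (the $K'$-side) on the side $S$. Every cut of $G^T=G^T_k$ is specified by the pair $(\vect{s},\vect{s}')$ and hence contributes to exactly one pair of totals $(x,x')=(\sumvect{s},\sumvect{s'})$; therefore the maximum cut size of $G^T$ equals $\max\set{F_k(x,x') \mid 0\le x\le \sumvect{m},\ 0\le x'\le\sumvect{m'}}$, which is (\ref{eqn:DynamicProggrammingFinal}). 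The base case (\ref{eqn:RecurrenceBase}) holds because $G^T_{-1}$ is the empty graph.

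For the inductive step I would fix the numbers $s_i\in[0,m_i]$ and $s'_i\in[0,m'_i]$ of instances of $v_i$ and $v'_i$ placed on side $S$. Restricting an optimal such cut to $G^T_{i-1}$ yields a cut with $x-s_i$ and $x'-s'_i$ instances on side $S$, and conversely any cut of $G^T_{i-1}$ with those totals extends by an arbitrary placement of row $i$; since (as argued below) the contribution of row $i$ depends only on the counts, an exchange argument shows the restricted cut may be taken optimal. Feasibility of the restricted instance is exactly $0\le x-s_i\le\sum_{j=0}^{i-1}m_j$ and $0\le x'-s'_i\le\sum_{j=0}^{i-1}m'_j$, which, combined with $s_i\in[0,m_i]$ and $s'_i\in[0,m'_i]$, give precisely the ranges $[L_i,U_i]$ and $[L'_i,U'_i]$. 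Hence $F_i(x,x')=\max_{s_i,s'_i}\left(F_{i-1}(x-s_i,x'-s'_i)+\Delta_i\right)$, where $\Delta_i$ counts the cut edges incident to row $i$.

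The key point, and the place where the chain structure is essential, is that $\Delta_i$ depends only on $s_i,s'_i,x,x'$ and the multiplicities, not on the finer cut on $G^T_{i-1}$. The edges appearing for the first time with row $i$ split into five complete or complete-bipartite families: the clique among the $m_i$ instances of $v_i$ (they are twins, hence mutually adjacent); the complete bipartite graph between these and the $\sum_{j=0}^{i-1}m_j$ earlier $K$-instances; the analogous two families for $v'_i$ against the $\sum_{j=0}^{i-1}m'_j$ earlier $K'$-instances; and the diagonal edges, which by $v_iv'_j \iff j<i$ join every instance of $v_i$ to every earlier $K'$-instance, while $v'_i$ has no diagonal neighbour inside $G_i$ (its diagonal edges are created only at later rows). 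For a complete bipartite family the number of cut edges is the product of the counts lying on opposite sides, and the earlier $K$- and $K'$-instances on side $S$ number $x-s_i$ and $x'-s'_i$ respectively. Thus each family contributes a product of counts; for instance the diagonal contribution is $s_i(\sum_{j=0}^{i-1}m'_j-(x'-s'_i))+(m_i-s_i)(x'-s'_i)$, and similarly for the four clique families.

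It then remains to add these count-products and match them to (\ref{eqn:RecurrenceStep}). I expect this to be the only computational step, and the essential check is the quadratic part: the $v_i$-self-clique contributes $-s_i^2$ and the $v_i$-versus-earlier-$K$ family contributes $+2s_i^2$ (net $+s_i^2$), symmetrically the primed families give $+(s'_i)^2$, and the diagonal family supplies the cross term $+2s_is'_i$, so the quadratic terms collapse to $(s_i+s'_i)^2$ exactly as displayed. Collecting the linear terms after substituting the earlier counts $x-s_i$ and $x'-s'_i$ reorganises the coefficient of $s_i$ into $\sum_{j=0}^{i-1}(m_j+m'_j)-2(x+x')$, the coefficient of $s'_i$ into $\sum_{j=0}^{i-1}m'_j-m_i-2x'$ (the $-m_i$ arising from re-expressing the earlier-$K'$ count in the diagonal term), and the part independent of $s_i,s'_i$ into $m_i(x+x')+m'_ix'$. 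The recurrence then follows by induction on $i$. The main obstacle is therefore not a deep idea but the disciplined bookkeeping of these five edge families; once the count-only dependence of $\Delta_i$ is justified by the completeness of each family, itself a consequence of the nested-neighbourhood (chain) property, the identity and hence the theorem follow.
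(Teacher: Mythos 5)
Your proposal is correct and follows essentially the same route as the paper: you define $F_i(x,x')$ as the maximum cut size of $G^T_i$ subject to the side-counts $(x,x')$, split the cut edges into those inside $G^T_{i-1}$ and those incident to the instance cliques $I(v_i)$, $I(v'_i)$ (the paper groups these as $E_S^1, E_S^2, E_S^3$ rather than your five complete/complete-bipartite families, a purely cosmetic difference), note that the latter depend only on $s_i, s'_i, x, x'$, and verify the same feasibility ranges and algebraic collapse to $(s_i+s'_i)^2$ plus the stated linear terms. The bookkeeping in your sketch checks out against (\ref{eqn:RecurrenceStep}).
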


\begin{proof}
$F_i(x,x')$ denotes the maximum cut size among all cuts $S$ of $G^T_i$ such that $\sum s = x$ and $\sum s' = x'$, i.e.
\[
F_i(x,x')=\max \set{\cs(S)| S \subseteq V(G^T_i), \sumvect{s} = x, \sumvect{s'} = x'}.
\]
With this definition it is clear that the maximum cut size of $G^T$ is given by (\ref{eqn:DynamicProggrammingFinal}). We now provide a recurrence formula for $F_i(x,x')$.

\begin{figure}
\centering
\commentfig{
\scalebox{1} 
{
\begin{pspicture}(0,-1.71)(6.7196875,1.71)
\psellipse[linewidth=0.02,dimen=outer](2.5878124,-0.01)(0.41,1.7)
\usefont{T1}{ptm}{m}{n}
\rput(0.53765625,-1.38){row $i$}
\pscircle[linewidth=0.02,dimen=outer](2.5978124,1.31){0.2}
\pscircle[linewidth=0.02,dimen=outer](2.5978124,0.71){0.2}
\pscircle[linewidth=0.02,dimen=outer](2.5978124,0.11){0.2}
\pscircle[linewidth=0.02,dimen=outer](2.5978124,-1.29){0.2}
\pswedge[linewidth=0.02,fillstyle=solid,fillcolor=black](2.5978124,-1.29){0.2}{-8.130102}{180.0}
\pscircle[linewidth=0.02,dimen=outer](3.9978125,1.31){0.2}
\pscircle[linewidth=0.02,dimen=outer](3.9978125,0.71){0.2}
\pscircle[linewidth=0.02,dimen=outer](3.9978125,-1.29){0.2}
\pswedge[linewidth=0.02,fillstyle=solid,fillcolor=black](3.9978125,-1.29){0.2}{123.69007}{180.0}
\pscircle[linewidth=0.02,dimen=outer](3.9978125,0.11){0.2}
\psellipse[linewidth=0.02,dimen=outer](3.9878125,-0.01)(0.41,1.7)
\pscircle[linewidth=0.02,dimen=outer](2.5978124,-0.49){0.2}
\pscircle[linewidth=0.02,dimen=outer](3.9978125,-0.49){0.2}
\psframe[linewidth=0.02,linestyle=dashed,dash=0.16cm 0.16cm,dimen=outer](4.5978127,1.71)(1.9978125,-0.89)
\pscircle[linewidth=0.02,dimen=outer](1.5978125,0.51){0.2}
\pswedge[linewidth=0.02,fillstyle=solid,fillcolor=black](1.5978125,0.51){0.2}{-75.96375}{180.0}
\pscircle[linewidth=0.02,dimen=outer](4.9978123,0.51){0.2}
\pswedge[linewidth=0.02,fillstyle=solid,fillcolor=black](4.9978123,0.51){0.2}{29.74488}{180.0}
\usefont{T1}{ptm}{m}{n}
\rput(1.9592187,-1.38){$s_i$}
\usefont{T1}{ptm}{m}{n}
\rput(4.599219,-1.38){$s'_i$}
\usefont{T1}{ptm}{m}{n}
\rput(0.64921874,0.42){$x - s_i $}
\usefont{T1}{ptm}{m}{n}
\rput(5.929219,0.42){$x' - s'_i $}
\psline[linewidth=0.02cm](2.5978124,-1.29)(3.9978125,-0.49)
\psline[linewidth=0.02cm](2.5978124,-1.29)(3.9978125,0.11)
\psline[linewidth=0.02cm](2.5978124,-1.29)(3.9978125,0.71)
\psline[linewidth=0.02cm](2.5978124,-1.29)(3.9978125,1.31)
\end{pspicture} 
}}
\caption{A step of the recurrence.}
\label{fig:dynamicprogramming}
\end{figure}
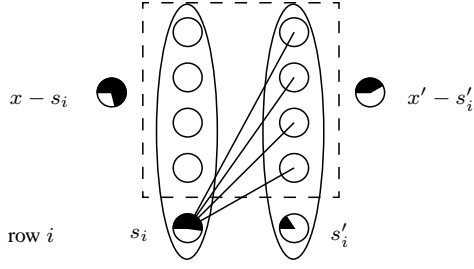

In the base case $CC_{-1}$ is the empty graph, therefore (\ref{eqn:RecurrenceBase}) holds. For the following discussion refer to Figure \ref{fig:dynamicprogramming}. Consider a cut $S$ of the subgraph $G^T_i$ for some $i \geq 0$. $G^T_i$ can be partitioned into a subgraph $G^T_{i-1}$ and two cliques $I(v_i), I(v'_i)$. Therefore the edges of $E(S,\bar{S})$ can be partitioned in the following way according to their endpoints:
\begin{align*}
E_S^1 & = E(S,\bar{S}) \cap E(G^T_{i-1}), \\
E_S^2 & = \left \{uv \in E(S,\bar{S}) | u \in I(v'_i) \right\}, \\
E_S^3 & = \left \{uv \in E(S,\bar{S}) | u \in I(v_i)  \right\}.
\end{align*}
For every instance $u$ of $v_i$ we have $N(u) = V(G^T)-I(v'_i)$ and for every instance $u'$ of $v'_i$ we have $N(u')=K'$. Therefore, the size of these sets are
\begin{align*}
\abs{E_S^1} & = \cs(\underbar{S}) \\
\abs{E_S^2} & = s'_i \left(\sumvect{m'} - x' \right) + \bar{s_i}' \cdot (x'-s'_i) = m'_i \cdot x' + s'_i \left( \sum_{j=0}^{i-1}{m'_j} - 2 x' + s'_i \right)\\
\abs{E_S^3} & = s_i \left( \sumvect{m} - x + \sum_{j=0}^{i-1} m'_j - x' + s'_i \right) + \bar{s_i} \left( x + x' - s'_i - s_i \right)\\
& = m_i (x+x'-s'_i) + s_i \left( \sum_{j=0}^{i-1}(m_j+m'_j) - 2 (x+x'-s'_i) + s_i \right)
\end{align*}
where $\underbar{S}$ is the cut that $S$ induces on $G^T_{i-1}$. Thus $\sumvect{s_-} = x - s_i$ and $\sumvect{s'_-} = x' - s'_i$.
Then $F_i(x,x')$ is the maximum over all possible values of $s_i, s'_i$ of
\[
F_{i-1}(x - s_i, x' - s'_i) + \abs{E_S^2} + \abs{E_S^3}.
\]
As for the possible values of $s_i, s'_i$, recall that $0 \leq s_i \leq m_i$ and $0 \leq s'_i \leq m'_i$. Similarly, $0 \leq x - s_i \leq \sum_{j=0}^{i-1} m_j$ and $0 \leq x' - s'_i \leq \sum_{j=0}^{i-1} m'_j$. Therefore, $F_i(x,x')$ is given by (\ref{eqn:RecurrenceStep}).
\qed
\end{proof}

\begin{theorem}
$\maxcut$ can be solved in time $O(\abs{V(G^T)}^4)$ for a co-bipartite chain graph $G^T$.
\end{theorem}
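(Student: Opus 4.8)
The plan is to convert the recurrence of Theorem~\ref{thm:recurrenceRelation} into a bottom-up dynamic program and then account for its running time carefully. Write $n = \abs{V(G^T)}$. First I would dispose of the preprocessing: recognize that $G^T$ is a co-bipartite chain graph, extract its partition into two cliques together with the nested ordering, and contract twins to obtain $G \in \set{CC_k, CC_k^-}$ along with the multiplicity vectors $\vect{m}, \vect{m'}$ (setting $m(v'_k)=0$ in the $CC_k^-$ case, as arranged earlier). All of this is polynomial in $n$ and, as the analysis will show, dominated by the cost of filling the table. Here the number of rows is $k+1$, so $k = O(n)$, while $\sumvect{m} = \abs{K} \le n$, $\sumvect{m'} = \abs{K'} \le n$, and $\sumvect{m} + \sumvect{m'} = n$.

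Next I would fill the table $F_i(x,x')$ for $i$ running from $-1$ up to $k$ and for all $0 \le x \le \sumvect{m}$, $0 \le x' \le \sumvect{m'}$, using the base case (\ref{eqn:RecurrenceBase}) and the step (\ref{eqn:RecurrenceStep}); the maximum cut size is then the maximum over the last layer, as in (\ref{eqn:DynamicProggrammingFinal}), and the optimal cut itself is recovered by standard backtracking within the same time bound. Before filling, I would precompute the prefix sums $\sum_{j=0}^{i-1} m_j$ and $\sum_{j=0}^{i-1} m'_j$ in $O(n)$ time, so that the bounds $L_i, U_i, L'_i, U'_i$ and every coefficient appearing in (\ref{eqn:RecurrenceStep}) are available in constant time. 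With this, each term inside the maximization is evaluated in $O(1)$ via a single lookup into the already-computed layer $F_{i-1}$. Since $s_i$ ranges over at most $m_i+1$ values and $s'_i$ over at most $m'_i+1$ values, a single entry $F_i(x,x')$ costs $O((m_i+1)(m'_i+1))$, and hence the whole layer $i$ costs $O\big((\sumvect{m}+1)(\sumvect{m'}+1)(m_i+1)(m'_i+1)\big)$.

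The running time is then obtained by summing the per-layer cost over $i$:
\[
\sum_{i=0}^{k} (\sumvect{m}+1)(\sumvect{m'}+1)(m_i+1)(m'_i+1)
= (\sumvect{m}+1)(\sumvect{m'}+1)\sum_{i=0}^{k}(m_i+1)(m'_i+1).
\]
The key estimate is $\sum_i m_i m'_i \le \big(\sum_i m_i\big)\big(\sum_i m'_i\big) = \sumvect{m}\,\sumvect{m'}$, which holds because the omitted cross terms $m_i m'_j$ with $i \ne j$ are all nonnegative; expanding the product then gives $\sum_i (m_i+1)(m'_i+1) = O(\sumvect{m}\,\sumvect{m'} + n)$. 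Combining this with the bound $\sumvect{m}\,\sumvect{m'} \le \big((\sumvect{m}+\sumvect{m'})/2\big)^2 = n^2/4$, both factors are $O(n^2)$, so the total is $O(n^4)$.

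The step I expect to be the genuine obstacle is precisely this final accounting. A naive estimate --- the table has $O(k\,\sumvect{m}\,\sumvect{m'}) = O(n^3)$ entries, each costing up to $O(n^2)$ --- yields only $O(n^5)$ and misses the claimed bound. The observation that rescues $O(n^4)$ is that the work at layer $i$ is controlled by the \emph{local} multiplicities $m_i, m'_i$ rather than by a global maximum, so the per-layer costs aggregate through $\sum_i m_i m'_i \le \sumvect{m}\,\sumvect{m'}$ instead of through $k \cdot \max_i m_i m'_i$. I would also make explicit that the analysis is carried out in the unit-cost RAM model, in which arithmetic on the (polynomially bounded) cut sizes is $O(1)$; charging individual bit operations would add only a polylogarithmic factor.
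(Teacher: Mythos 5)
Your proposal is correct and takes essentially the same approach as the paper: the same bottom-up dynamic program over the recurrence of Theorem~\ref{thm:recurrenceRelation}, with per-entry cost $O(m_i m'_i)$ aggregated across layers via $\sum_i m_i m'_i \le \left(\sum_i m_i\right)\left(\sum_i m'_i\right) \le \abs{V(G^T)}^2/4$. The only cosmetic difference is that you justify the middle inequality by nonnegativity of the omitted cross terms, whereas the paper routes it through Cauchy--Schwarz and $\|\cdot\|_2 \le \|\cdot\|_1$; the resulting $O(\abs{V(G^T)}^4)$ bound is identical.
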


\begin{proof}
Algorithm \ref{alg:DynAlg} calculates the recurrence relation described in Theorem \ref{thm:recurrenceRelation} through dynamic programming. The running time of function \textsc{CalculateOpt} is proportional to the number of its iterations, i.e. $O(m_i m'_i)$. The running time of the algorithm is proportional to $\sum_{i=0}^k \left(\sum_{j=0}^{i} m_j \right) \left(\sum_{j=0}^{i} m'_j \right) m_i m'_i$. Let $\abs{V(G^T)}=\sumvect{m} + \sumvect{m'}=N$. We proceed as follows
\begin{eqnarray}
& & \sum_{i=0}^k \left(\sum_{j=0}^{i} m_j \right) \left(\sum_{j=0}^{i} m'_j \right) m_i m'_i \nonumber\\
& \leq & \left( \sumvect{m} \right) \left( \sumvect{m'} \right) \sum_{i=0}^k  m_i m'_i
\leq \frac{N^2}{4} \sum_{i=0}^k  m_i m'_i = \frac{N^2}{4} \vect{m} \cdot \vect{m'} \nonumber\\
& \leq & \frac{N^2}{4} \| \vect{m} \|_2 \cdot \| \vect{m'}\|_2 \leq \frac{N^2}{4} \left( \sumvect{m} \right) \left( \sumvect{m'} \right) \leq \frac{N^4}{16}.\nonumber
\end{eqnarray}
\qed
\end{proof}

\alglanguage{pseudocode}

\begin{algorithm}[H]
\caption{$\cochaindynalg$}\label{alg:DynAlg}
\begin{algorithmic}[1]
\Require {$G^T$ is a co-bipartite chain graph}
\Ensure {Return the maximum cut size of $G^T$}
\Statex
\State $G=(K,K',E) \gets$ contract every twin of $G^T$ to a single vertex.
\State \Comment $K$ is a clique with a vertex not adjacent to any vertex of $K'$.
\State $k \gets$ $\abs{K}-1$.
\State $\vect{m} \gets$ the multiplicity vector of the vertices of $K$.
\State $\vect{m'} \gets$ the multiplicity vector of the vertices of $K'$.
\Statex
\State $F(-1,0,0) \gets 0.$
\For {$i = 0$} {$k$}
    \For {$x=0$ \textbf{to} $\sum_{j=0}^i m_i$}
        \For {$x'=0$ \textbf{to} $\sum_{j=0}^i m'_i$}
            \State $F(i,x,x') \gets $\Call{CalculateOpt}{$i,x,x'$}.
        \EndFor
    \EndFor
\EndFor
\State \Return $\max \set{F(k,x,x')|~0 \leq x \leq \sumvect{m}, 0 \leq x' \leq \sumvect{m'}}$.
\Statex

\Function{CalculateOpt}{$i, x, x'$}
\State $w \gets \sum_{j=0}^{i-1} (m_j+m'_j) - 2 (x+x')$
\State $w' \gets \sum_{j=0}^{i-1} m'_j - m_i - 2x'$.
\State $max \gets 0$
\For{$s_i \gets \max(0,x-\sum_{j=0}^{i-1} m_j)$ \textbf{to} $\min (m_i,x)$}
    \For{$s_i' \gets \max(0,x'-\sum_{j=0}^{i-1} m'_j)$ \textbf{to} $\min (m'_i,x')$}
        \State $val \gets F(i-1,x-s_i,x'-s'_i) + w \cdot s_i + w' \cdot s'_i + (s_i + s'_i)^2$.
        \If {$val > max$}
            \State $max \gets val$.
        \EndIf
    \EndFor
\EndFor
\State \Return $max + m_i \cdot (x + x') + m'_i \cdot x'$.
\EndFunction
\end{algorithmic}
\end{algorithm}

We conclude this section with the following remark.
\begin{remark}
The structure of an optimal solution does not necessarily have the structure proven for the twin-free case, namely, three blocks of approximately equal length.
\end{remark}
\begin{proof}
Let $G^T$ be co-bipartite chain graph such that when we contract twins of $G^T$ we end up with $G=CC_9$ and the multiplicity vectors $\vect{m}=(1,1,1,10,1,1,1,1,1)$, $\vect{m'} = (1,1,1,1,1,10,1,1,1)$. Note that $\vect{s}=(1,1,1,3,0,0,0,0,0)$, $\vect{s'}=(1,1,1,1,1,7,0,0,0)$ is a cut as described above yielding a cut-set of size 210. However the output of Algorithm \ref{alg:DynAlg} for the same graph is 223 with the following cut $\vect{s}=(0,0,0,8,1,1,0,0,0)$, $\vect{s'}=(0,0,0,0,0,10,0,0,0)$.
\end{proof}

\section{Summary and Future Work}\label{sec:summary}
In this work, we studied the $\maxcut$ problem in co-bipartite chain graphs. We showed that even the twin-free case cannot be solved using known results, and identified an optimal solution for this case. For the general case we presented a dynamic programming algorithm that constitutes an evidence that the problem is polynomial-time solvable for this graph class.

Finding more efficient algorithms for this class of graphs, and extensions of this technique to other subclasses of co-bipartite graphs is work in progress. The complexity of the weighted $\maxcut$ problem in which edges have associated weights and one has to find a maximum weight cut is unknown for cobipartite chain graphs. The dynamic programming technique does not seem to be applicable to this variant of the problem. On the other hand, for co-bipartite chain graphs one can think about the variant of the problem in which the input is given in compact form, e.g. as two multiplicity vectors. In this case, our dynamic programming algorithm is pseudo-polynomial, since its time complexity is polynomial in the values in the vectors, but not in the size of their binary representations. The complexity of this variant is also an open problem.

\bibliographystyle{abbrvwithurl}
\bibliography{GraphTheory,Optical,Mordo,Approximation,References}

\newpage
\appendix
\huge{Appendix}
\normalsize
\section{Proof of Theorem \ref{thm:TwinFree} continued}\label{sec:appPrelim}
\begin{proof}
Let us consider two maximum cuts $S_1$ and $S_2$ in which $v_k$ is in $S$ and in $\bar{S}$ respectively.
We have $\cutsize{S_1} = g_1(x,y,z,t) = f(x,y,z,t) + 2z + y + t$ and $\cutsize{S_2} = g_2(x,y,z,t) = f(x,y,z,t) + 2x + y + t$ and $x+y+z+t = k$ in both cases. We relax the integrality constraints and calculate the following optimal (fractional) solutions respectively \cite{WolframAlphaCCk-1,WolframAlphaCCk-2}:
\[
\vect{v_1^*} = (x^*,y^*,z^*,t^*) = \left( \frac{k}{3}-\frac{1}{6},~\frac{k}{3} - \frac{1}{3},~\frac{k}{3}+\frac{1}{2},~0 \right),
\]
\[
\vect{v_2^*} = (x^*,y^*,z^*,t^*) = \left( \frac{k}{3}+\frac{1}{2},~\frac{k}{3} - \frac{1}{3},~\frac{k}{3}-\frac{1}{6},~0 \right).
\]
We observe that the values of the two solutions are equal, i.e. $g_1(\vect{v_1^*}) = g_2(\vect{v_2^*}) = \frac{5}{6}k^2 + \frac{5}{6}k + \frac{5}{12}$. In the sequel we round the solution $\vect{v_2^*}$ to obtain an optimal solution of the objective function $g=g_2$.

Let $Y=\set{\vect{v}=(x,y,z,t)|x+y+z+t=k,t=0}$. Clearly, $\vect{v_2^*} \in Y$. For any $\vect{v} \in Y$ we have
\[
g(\vect{v}) = f(\vect{v}) + 2x + y + t = 2x (y+z) + z (x+y) + \frac{y(y+1)}{2}+yz+2x.
\]
Let $\vect{\delta}=(\delta_x , - \delta_x - \delta_z, \delta_z, 0)$ for some $\delta_x, \delta_z \in \mathbb{R}$. Whenever $\vect{v}=(x,y,z,t) \in Y$ we have $\vect{v}+\vect{\delta} \in Y$, and
\[
g(\vect{v}) - g(\vect{v}+\vect{\delta}) = \left( 2x-y-z-\frac{3}{2} \right) \delta_x + \left( 2z-x-y+\frac{1}{2} \right) \delta_z + \left( \delta_x-\delta_z \right)^2 + \delta_x \delta_z.
\]
By substituting $\vect{v}=\vect{v_2^*}$ in the above equation we get,
\[
g(\vect{v_2^*}) - g(\vect{v_2^*}+\vect{\delta}) = -\frac{2}{3} \delta_z + (\delta_x-\delta_z)^2 + \delta_x \delta_z.
\]
We set
\[
(\delta_x, \delta_z) = \left\{
\begin{array}{ll}
(-1/2,1/6) & \textrm{if~}k \equiv 0 \mod 3 \\
(1/6,-1/6) & \textrm{if~}k \equiv 1 \mod 3 \\
(-1/6,1/2) & \textrm{if~}k \equiv 2 \mod 3
\end{array}
\right.
\]
and verify that a) $x+\delta_x$ and $z+\delta_z$ are integral and b) $-\frac{2}{3} \delta_z + (\delta_x-\delta_z)^2 + \delta_x \delta_z < 1$ in each case.
\qed
\end{proof}

\end{document}